\newtheorem{thm}{Theorem}
\newtheorem{prop}{Proposition}
\newtheorem{lemma}{Lemma}
\newcommand{\bbr}{{\mathbb R}}
\newcommand{\bbs}{{\mathbb S}}
\newcommand{\+}{|\!|\!|}
\begin{document}
\title{Asymptotic behaviour of the Boltzmann equation as a cosmological model}

\author{Ho Lee\footnote{holee@khu.ac.kr}}

\affil{Department of Mathematics and Research Institute for Basic Science, Kyung Hee University, Seoul, 130-701, Republic of Korea}

\maketitle

\begin{abstract}
As a Newtonian cosmological model the Vlasov-Poisson-Boltzmann system is considered, and a slightly modified Boltzmann equation, which describes the stability of an expanding universe, is derived. Asymptotic behaviour of solutions turns out to depend on the expansion of the universe, and in this paper we consider the soft potential case and will obtain asymptotic behaviour.
\end{abstract}

\section{Introduction}
A standard way to describe the dynamics of an ensemble of particles in a cosmological scale is to use general relativity and kinetic theory. The Einstein-Vlasov or Einstein-Boltzmann systems are commonly used to study the dynamics of many different kinds of cosmological models, for instance the FLRW, de Sitter, Bianchi spacetimes, etc. In this paper we are interested in the dynamics of a cosmological model, but Newtonian gravity theory will be considered instead of general relativity. The Vlasov-Poisson-Boltzmann(VPB) system will be the most relevant Newtonian model to study the time evolution of particles, where the particles interact with each other through gravity and collisions, and this system may be considered as a non-relativistic version of the Einstein-Boltzmann system. We refer to \cite{Lee132,LeeNungesser,LeeRendall131,LeeRendall132} for the Einstein-Boltzmann system, and in the present paper we study the VPB system as a cosmological model.

The VPB system is a system of partial differential equations which can describe the evolution of matter distribution in a statistical way such that the matter is treated as a collection of particles. In this system gravity and collisions are taken into account as the interactions between particles, and in particular Poisson's equation describes the gravitational attraction between particles. In a cosmological scale, however, the expansion of the universe must be considered in addition to the gravity. Einstein's equations can describe in an integrated way the expansion of the universe and the gravitational attraction between particles, but Poisson's equation does not. To describe the expansion of the universe in the context of Newtonian cosmology, we follow the approach of \cite{Lee10,Lee131,Rein97,ReinRendall94}. The authors of \cite{ReinRendall94} first considered the Vlasov-Poisson(VP) system as a cosmological model. An exact solution of the VP system, which is a description of an expanding universe, was introduced, and then existence, asymptotic behaviour, and stability of solutions were studied for its perturbations in \cite{Lee10,Rein97,ReinRendall94}. This approach was applied to the VPB system in \cite{Lee131}, where existence was proved in the hard sphere case, but asymptotic behaviour was not treated. It was observed that the effect of collisions by the Boltzmann equation decreases in a cosmological setting, and in particular it seems to decrease more in the hard sphere case. In this paper, instead, we consider the soft potential case and will see that the effect of collisions is recovered to an extent. In the end, we will obtain asymptotic behaviour, and furthermore will see that the softer the potential is, the more rapidly solutions decay.

\subsection{The VPB system in a cosmological setting}
The VPB system is written as follows:
\begin{align*}
&\partial_t F+ v \cdot \nabla_x F -\nabla_x\phi\cdot\nabla_v F=\int_{\bbr^3\times\bbs^2}|u-v|^\gamma B(\omega)\Big(F(u')F(v')-F(u)F(v)\Big)d\omega du,\\
&\Delta \phi=4\pi\rho,\quad\rho=\int_{\bbr^3}F dv,
\end{align*}
where $F=F(t,x,v)$ is the velocity distribution function and describes the density of particles at $x\in\bbr^3$ with velocity $v\in\bbr^3$ at time $t\geq 0$. The quantities $\phi=\phi(t,x)$ and $\rho=\rho(t,x)$ are the gravitational potential and the mass density, respectively, and the right side of the first equation is called the collision operator with post-collision velocities $v'$ and $u'$ for given pre-collision velocities $v$ and $u$, where
\[
v'=v+((u-v)\cdot\omega)\omega,\quad
u'=u+((v-u)\cdot\omega)\omega,\quad\omega\in\bbs^2.
\]
The scattering kernel in the collision operator has an exponent $\gamma$, which ranges $-3<\gamma\leq 1$, and the Grad angular cutoff assumption, i.e., $0<B(\omega)\leq C|\cos\theta|$ with the scattering angle $\theta$, will be assumed throughout the paper. We refer to \cite{Glassey} for more details on the Boltzmann equation.

We consider the following solution to the VPB system, which was introduced in \cite{Lee131}:
\begin{align}\label{universe}
(\mu,\rho_0,\phi_0):=\Big(\pi^{-\frac32}\exp\Big(-a^2(t)|v-\dot{a}(t)a^{-1}(t)x|^2\Big),a^{-3}(t),(2\pi/3)a^{-3}(t)|x|^2\Big),
\end{align}
where the scale factor $a(t)$ satisfies the following ordinary differential equation:
\begin{align}\label{scale_factor}
\ddot{a}(t)+(4\pi/3)a^{-2}(t)=0,
\end{align}
and the above quantities are now understood as a universe, which is spatially homogeneous and isotropic. Depending on the asymptotic behaviour of the scale factor, the above quantities may describe an expanding or a shrinking universes, and this will be discussed later. We now take a perturbation to \eqref{universe} as $F=\mu+\sqrt{\mu}f$, $\rho=\rho_0+\sigma$, and $\phi=\phi_0+\psi$ to get a system of equations for perturbations $(f,\sigma,\psi)$, and then take the transformation $\bar{x}=a^{-1}x$ and $\bar{v}=av-\dot{a}x$ as in \cite{Lee131}, to obtain a system of equations for $(\bar{f},\bar{\sigma},\bar{\psi})$ such that $f(t,x,v)=\bar{f}(t,\bar{x},\bar{v})$, $\sigma(t,x)=\bar{\sigma}(t,\bar{x})$, and $\psi(t,x)=\bar{\psi}(t,\bar{x})$. After dropping the bars we get the following equations:
\begin{align}
&\partial_tf+a^{-2}v\cdot\nabla_xf-\nabla_x\psi\cdot\nabla_vf+fv\cdot\nabla_x\psi+2\sqrt{\mu}v\cdot\nabla_x\psi\nonumber\allowdisplaybreaks\\
&\quad=a^{-3-\gamma}\iint |u-v|^\gamma B(\omega)e^{-\frac{1}{2}|u|^2}\Big(e^{-\frac{1}{2}|v'|^2}f(u')+e^{-\frac{1}{2}|u'|^2}f(v')\Big)d\omega du\nonumber\\
&\qquad-a^{-3-\gamma}\iint |u-v|^\gamma B(\omega)e^{-\frac{1}{2}|u|^2}\Big(e^{-\frac{1}{2}|v|^2}f(u)+e^{-\frac12|u|^2}f(v)\Big)d\omega du\nonumber\\
&\qquad+a^{-3-\gamma}\iint |u-v|^\gamma B(\omega)e^{-\frac{1}{2}|u|^2}\Big(f(u')f(v')-f(u)f(v)\Big)d\omega du,\allowdisplaybreaks\label{vpb1_tilde}\\
&\Delta \psi=4\pi a^{-1}\sigma=4\pi a^{-1}\int e^{-\frac{1}{2}|v|^2}f dv,\label{vpb2_tilde}
\end{align}
and the universe model \eqref{universe} is transformed to $\mu(v)=\pi^{-3/2}\exp(-|v|^2)$, $\rho_0(t)=a^{-3}(t)$, and $\phi_0(t,x)=(2\pi/3)a^{-1}(t)|x|^2$. Note that the quantity $\mu(v)$ is now identical to the usual Maxwellian.

\subsection{The Boltzmann equation as a cosmological model}
As a simplest cosmological model we consider the spatially homogeneous case of the equations \eqref{vpb1_tilde}--\eqref{vpb2_tilde}:
\begin{align}\label{boltz}
\partial_tf+a^{-3-\gamma}Lf=a^{-3-\gamma}\Gamma(f,f),
\end{align}
where $L=\nu-K$ with $K=K_2-K_1$ such that
\begin{align*}
&K_2f=\iint |u-v|^\gamma B(\omega)e^{-\frac{1}{2}|u|^2}\Big(e^{-\frac{1}{2}|v'|^2}f(u')+e^{-\frac{1}{2}|u'|^2}f(v')\Big)d\omega du,\\
&K_1f=e^{-\frac{1}{2}|v|^2}\iint |u-v|^\gamma B(\omega)e^{-\frac{1}{2}|u|^2}f(u) d\omega du,\allowdisplaybreaks\\
&\nu=\iint |u-v|^\gamma B(\omega)e^{-|u|^2}d\omega du=\mbox{``collision frequency''},\\
&\Gamma(f,f)=\iint |u-v|^\gamma B(\omega)e^{-\frac{1}{2}|u|^2}\Big(f(u')f(v')-f(u)f(v)\Big)d\omega du,
\end{align*}
and the equation \eqref{vpb2_tilde} vanishes.

In this paper we will study the equation \eqref{boltz} with \eqref{scale_factor}. Global existence was considered in \cite{Lee131} in the hard sphere case, i.e., $\gamma=1$, but their asymptotic behaviour was not obtained. In the present paper we consider the soft potential case, i.e., $-3<\gamma<0$, and will obtain asymptotic behaviour of solutions of \eqref{boltz}. Note that the homogeneity assumption is taken on the transformed equations \eqref{vpb1_tilde}--\eqref{vpb2_tilde}. Hence, the solutions of \eqref{boltz} will be inhomogeneous in the original coordinates, and the result will show that the solutions converge to the universe model \eqref{universe} in an inhomogeneous way.

\section{Preliminaries}
Let $\langle\cdot,\cdot\rangle$ and $\|\cdot\|$ be the standard inner product and the corresponding $L^2$-norm, respectively, in $\bbr^3_v$, and define $\langle g,h\rangle_\nu=\langle \nu g,h\rangle$ with the corresponding norm $\|\cdot\|_\nu$. For a multi-index $\beta$, we write $\partial_\beta f(t,v)=\partial_v^{\beta}f(t,v)$, and throughout the paper $N$ will denote a fixed positive integer such that $N\geq 4$. For a non-negative integer $k$, we define
\begin{align*}
\+f(t)\+_k^2=\sum_{|\beta|\leq N}\|w^{|\beta|-k}\partial_\beta f(t)\|^2,\quad\+f(t)\+_{\nu,k}^2=\sum_{|\beta|\leq N}\|w^{|\beta|-k}\partial_\beta f(t)\|_\nu^2,
\end{align*}
where $w(v)=(1+|v|)^\gamma$ is a weight function. Note that $w$ is equivalent to the collision frequency. For simplicity we will write $a_\gamma(t):=a^{-3-\gamma}(t)$, and to prove existence we will need the following quantities:
\[
E_k(t)=\frac{1}{2}\+f(t)\+_k^2+\int_0^t a_\gamma(s)\+f(s)\+_{\nu,k}^2ds,\quad\mathcal{E}_m(t)=\sum_{k=0}^m E_k(t).
\]
Since $w\leq 1$, we have $E_k\leq E_m$ for $k\leq m$, and can see that $\mathcal{E}_m$ and $E_m$ are equivalent. The following quantity will be used to study asymptotic behaviour:
\[
y_r(t)=\sum_{0\leq k\leq r}\+f(t)\+_k^2=\sum_{\substack{|\beta|\leq N\\ 0\leq k\leq r}}\|w^{|\beta|-k}\partial_\beta f(t)\|^2.
\]
Let $a$ be a solution of the equation \eqref{scale_factor}. Then, the quantity $E_a=(1/2)\dot{a}^2-(4\pi/3)a^{-1}$ is conserved, which is non-negative for the expanding cases. According to \cite{ReinRendall94}, we have two cases: if $E_a=0$, or equivalently $\dot{a}(0)=(8\pi/3)^{1/2}$, then we solve $a(t)=(\sqrt{6\pi}t+1)^{2/3}$, while if $E_a>0$, or equivalently $\dot{a}(0)>(8\pi/3)^{1/2}$, then we see that $2E_a\leq \dot{a}(t)^2\leq 2E_a+8\pi/3$, and in this sense we say that $a(t)\sim t^{2/3}$ or $a(t)\sim t$. Here, we set $a(0)=1$ for simplicity.

The following lemmas are originally introduced in \cite{Guo031}, but the statements and conditions of the lemmas are slightly improved by the arguments of \cite{StrainGuo06} as follows. We refer to \cite{Guo031,StrainGuo06} for the proofs of the following lemmas.
\begin{lemma}\label{lem1}
For any $\theta\in\bbr$, $|\langle w^{2\theta} Kg_1,g_2\rangle|\leq C\|w^\theta g_1\|_\nu \|w^\theta g_2\|_\nu$.
\end{lemma}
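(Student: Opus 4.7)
\medskip
\noindent\textbf{Proof proposal.} The plan is to realise $K=K_2-K_1$ as a genuine integral operator on $\bbr^3_v$ and apply a Schur-type kernel estimate combined with a weighted Cauchy--Schwarz. Write
\[
(Kg)(v)=\int_{\bbr^3} k(v,v')\,g(v')\,dv'.
\]
The piece $K_1$ is already in this form once we set $u=v'$, producing a bounded Gaussian kernel times $|v-v'|^\gamma$. For $K_2$ one carries out the Carleman-type change of variables, trading the $(u,\omega)$-integration for a $v'$-integration over $\bbr^3$ parametrised by the collision geometry; the angular cutoff assumption $B(\omega)\leq C|\cos\theta|$ makes the remaining angular integral harmless. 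The outcome, familiar from Grad--Caflisch, is the pointwise bound
\[
|k(v,v')|\leq C\Bigl(|v-v'|^\gamma+|v-v'|^{\gamma-1}\Bigr)\exp\!\Bigl(-\tfrac18|v-v'|^2-\tfrac{1}{8}\tfrac{(|v|^2-|v'|^2)^2}{|v-v'|^2}\Bigr).
\]

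Next, to transfer the weight, set $h_i=w^\theta g_i$ and rewrite
\[
\langle w^{2\theta}Kg_1,g_2\rangle=\iint k_\theta(v,v')\,h_1(v')\,h_2(v)\,dv\,dv',\qquad k_\theta(v,v'):=\frac{w^\theta(v)}{w^\theta(v')}\,k(v,v').
\]
Cauchy--Schwarz against the auxiliary weight $\sqrt{\nu(v)/\nu(v')}$ yields
\[
|\langle w^{2\theta}Kg_1,g_2\rangle|^2\leq \Bigl(\iint |k_\theta|\tfrac{w(v)}{w(v')}|h_2(v)|^2\,dv\,dv'\Bigr)\Bigl(\iint |k_\theta|\tfrac{w(v')}{w(v)}|h_1(v')|^2\,dv\,dv'\Bigr),
\]
so the lemma reduces to the Schur bound
\[
\sup_v \int_{\bbr^3}|k_\theta(v,v')|\,\frac{w(v)}{w(v')}\,dv'\leq C\,\nu(v),
\]
together with its symmetric version obtained by exchanging $v$ and $v'$ (which follows from the same estimate since $k$ is symmetric up to the Gaussian factor).

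For the Schur bound, one splits the $v'$-integral into the near-diagonal region $|v-v'|\leq 1$ and the far region $|v-v'|>1$. On the near-diagonal piece only the singular term $|v-v'|^{\gamma-1}$ matters, and it is locally integrable in dimension $3$ precisely because $\gamma-1>-4$, i.e.\ $\gamma>-3$; the weight ratio $w(v)^{1+2\theta}/w(v')^{1+2\theta}$ is bounded there by a constant depending only on $\theta$ and $v$, and everything collapses to $C\,w(v)=C\,\nu(v)$. On the far region the polynomial weight ratio is controlled by the Gaussian factor $\exp(-|v-v'|^2/8)$. The main obstacle is the regime $|v'|\gg|v|$ (or vice versa) in the tail, where the weight quotient $(1+|v|)^{\gamma(1+2\theta)}(1+|v'|)^{-\gamma(1+2\theta)}$ may be unbounded for arbitrary $\theta\in\bbr$; this is exactly where the finer exponential factor $\exp(-(|v|^2-|v'|^2)^2/(8|v-v'|^2))$ must be used to absorb the algebraic growth, and it is the Strain--Guo refinement of Guo's original argument that allows the full range $\theta\in\bbr$ rather than only $\theta\geq 0$. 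Once the Schur bound is in place the lemma follows immediately from the Cauchy--Schwarz step above.
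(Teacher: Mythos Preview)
Your overall strategy---writing $K$ as an integral operator via the Grad--Caflisch kernel and then running a weighted Schur/Cauchy--Schwarz argument---is precisely the route taken in \cite{Guo031,StrainGuo06}, which the paper cites in lieu of a proof; you also correctly identify that the second exponential factor $\exp\bigl(-\tfrac18(|v|^2-|v'|^2)^2/|v-v'|^2\bigr)$ is what extends the estimate to arbitrary $\theta\in\bbr$. So the architecture matches.

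There is, however, a concrete slip in the near-diagonal step. In $\bbr^3$ the power $|x|^\alpha$ is locally integrable if and only if $\alpha>-3$, not $\alpha>-4$ as you assert. With your stated pointwise bound the most singular piece on $\{|v-v'|\le1\}$ is $|v-v'|^{\gamma-1}$, and the Schur integral $\int_{|v-v'|\le1}|v-v'|^{\gamma-1}\,dv'$ therefore converges only for $\gamma>-2$. For $-3<\gamma\le-2$---a range the paper's main theorem explicitly uses---your near-diagonal estimate diverges and the argument collapses. The root cause is that the exponent $\gamma-1$ you attribute to the $K_2$ kernel is too pessimistic: carrying out the Carleman computation carefully with the Grad cutoff $B\le C|\cos\theta|$ yields a strictly milder near-diagonal singularity that \emph{is} locally integrable across the full range $-3<\gamma<0$. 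Once you substitute the correct kernel bound from \cite{Guo031,StrainGuo06}, the remainder of your Schur argument goes through unchanged.
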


\begin{lemma}\label{lem2}
For any $\theta\in\bbr$ and $\eta>0$, the operator $K$ can be split into $K=K_c+K_s$, where $K_c$ is a compact operator in $L^2_\nu(\bbr^3)$ with respect to $|\cdot|_\nu$ and $|\langle w^{2\theta} K_sg_1,g_2\rangle|\leq \eta\|w^\theta g_1\|_\nu \|w^\theta g_2\|_\nu$.
\end{lemma}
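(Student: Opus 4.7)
The plan is to realize $K=K_2-K_1$ as an integral operator $Kg(v)=\int_{\bbr^3}k(v,v')g(v')dv'$ and then to split the kernel $k$ into a smooth, compactly supported piece (which will give a compact operator) and a small tail (which will give an $\eta$-small operator). First I would perform the classical change of variables (in the spirit of Grad/Carleman) on the gain term $K_2$, combine it with the kernel of $K_1$, and arrive at a pointwise bound of the schematic form
\begin{align*}
|k(v,v')|\leq C|v-v'|^{\gamma}\exp\!\left(-c|v-v'|^{2}-c\frac{(|v|^{2}-|v'|^{2})^{2}}{|v-v'|^{2}}\right)+\text{lower-order pieces},
\end{align*}
valid for $-3<\gamma<0$. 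The relevant features are that the singularity $|v-v'|^{\gamma}$ is locally integrable in $\bbr^{3}$ because $\gamma>-3$, and that the exponential factor provides Gaussian decay away from the diagonal.

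Next, I would fix a smooth cutoff $\chi_{m}(v,v')\in C_{c}^{\infty}(\bbr^{3}\times\bbr^{3})$ equal to $1$ on $\{|v-v'|\geq 1/m,\ |v|+|v'|\leq m\}$ and supported in $\{|v-v'|\geq 1/(2m)\}\cap\{|v|+|v'|\leq 2m\}$. Define $K_{c}$ to be the integral operator with kernel $k\chi_{m}$ and set $K_{s}:=K-K_{c}$, whose kernel is $k(1-\chi_{m})$. Because $k\chi_{m}$ is bounded with compact support, $K_{c}$ is Hilbert--Schmidt on $L^{2}(\bbr^{3})$; since $\nu$ and $w^{2\theta}$ are bounded above and below on the support of $\chi_{m}$, it remains compact on $L^{2}_{\nu}(\bbr^{3})$ with respect to $\|\cdot\|_{\nu}$.

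The heart of the argument is the bound on $K_{s}$. A Cauchy--Schwarz / Schur-test computation yields
\begin{align*}
|\langle w^{2\theta}K_{s}g_{1},g_{2}\rangle|\leq \|w^{\theta}g_{1}\|_{\nu}\|w^{\theta}g_{2}\|_{\nu}\sqrt{A(m)B(m)},
\end{align*}
where
\begin{align*}
A(m)=\sup_{v}\int|k(v,v')|(1-\chi_{m}(v,v'))\frac{w^{\theta}(v)}{w^{\theta}(v')}\frac{\sqrt{\nu(v')}}{\sqrt{\nu(v)}}\,dv',
\end{align*}
and $B(m)$ is its mirror in $v\leftrightarrow v'$. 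I would then estimate these by splitting the domain into the near-diagonal region $\{|v-v'|\leq 1/m\}$, on which the integral of $|v-v'|^{\gamma}$ is $O(m^{-(3+\gamma)})\to 0$ since $\gamma>-3$, and the tail region $\{|v|+|v'|\geq m\}$, where the Gaussian factor in $k$ dominates every polynomial weight and produces an exponentially small contribution. Choosing $m$ large enough makes $A(m)+B(m)\leq\eta^{2}$.

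The main obstacle is the control of the Schur integrals in the soft-potential regime, since $\nu(v)\sim(1+|v|)^{\gamma}$ with $\gamma<0$ \emph{decreases} at infinity, so the ratio $\sqrt{\nu(v')/\nu(v)}$ together with the weight ratio $w^{\theta}(v)/w^{\theta}(v')$ can grow polynomially in $|v|$ when $|v'|$ is small compared to $|v|$. The remedy is to split the tail further into $\{|v'|\leq|v|/2\}$ and $\{|v'|\geq|v|/2\}$: on the former, $(|v|^{2}-|v'|^{2})^{2}/|v-v'|^{2}$ is of order $|v|^{2}$, so the exponential factor in $k$ absorbs any polynomial in $v$; on the latter, the ratios in the integrand are uniformly bounded and the Gaussian decay alone suffices. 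This refinement, due to Strain--Guo, is what allows the weight exponent $\theta$ to be arbitrary in the statement.
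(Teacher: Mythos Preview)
The paper does not supply its own proof of this lemma; it explicitly states that Lemmas~\ref{lem1}--\ref{lem4} are taken from \cite{Guo031} with refinements from \cite{StrainGuo06} and refers the reader to those works for the arguments. Your proposal is a faithful outline of precisely that proof: the Grad/Carleman kernel representation for $K$, the pointwise bound with Gaussian off-diagonal decay, the smooth cutoff producing a Hilbert--Schmidt piece $K_c$ and a remainder $K_s$, and the Schur-test estimate of $K_s$ with the additional $\{|v'|\leq|v|/2\}$ versus $\{|v'|\geq|v|/2\}$ splitting from \cite{StrainGuo06} that handles the polynomial weight $w^{\theta}$ for arbitrary $\theta$. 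So your approach is correct and coincides with the one the paper invokes by citation.
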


\begin{lemma}\label{lem_positivity}
There exists a $\delta>0$ such that $\langle Lg,g\rangle\geq \delta\|g\|_\nu^2$.
\end{lemma}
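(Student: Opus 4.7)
The plan is to combine the operator decomposition of Lemma~\ref{lem2} with a contradiction/compactness argument in the spirit of Grad, Guo, and Strain-Guo. First, using $\langle Lg,g\rangle = \|g\|_\nu^2 - \langle Kg,g\rangle$, I apply Lemma~\ref{lem2} with $\theta = 0$ and a fixed small $\eta>0$ to split $K = K_c + K_s$, where $K_c$ is compact on $L^2_\nu$ and $|\langle K_sg,g\rangle| \leq \eta\|g\|_\nu^2$. This gives
\[
\langle Lg,g\rangle \;\geq\; (1-\eta)\|g\|_\nu^2 \;-\; \langle K_cg,g\rangle,
\]
reducing the task to the uniform strict bound $\langle K_cg,g\rangle \leq (1-2\eta)\|g\|_\nu^2$, from which $\delta = \eta$ suffices.

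Second, I argue by contradiction. If the bound fails, there is a sequence $\{g_n\}$ with $\|g_n\|_\nu = 1$ and $\langle Lg_n,g_n\rangle \to 0$. Pass to $g_n \rightharpoonup g_*$ weakly in $L^2_\nu$. By compactness of $K_c$, $\langle K_cg_n,g_n\rangle \to \langle K_cg_*,g_*\rangle$, and combining this with the step-one reduction forces $\langle K_cg_*,g_*\rangle \geq 1 - 2\eta$; Lemma~\ref{lem1} in turn bounds $|\langle K_cg_*,g_*\rangle|$ by a constant multiple of $\|g_*\|_\nu^2$, so $\|g_*\|_\nu \geq c > 0$. The classical $H$-theorem symmetrization
\[
\langle Lg,g\rangle \;=\; \tfrac{1}{4}\iiint |u-v|^\gamma B(\omega)\sqrt{\mu(u)\mu(v)}\,\bigl(g(u')+g(v')-g(u)-g(v)\bigr)^2 \, d\omega\, du\, dv
\]
exhibits $\langle L\cdot,\cdot\rangle$ as a nonnegative, weakly lower semicontinuous quadratic form on $L^2_\nu$; hence $\langle Lg_*,g_*\rangle = 0$ and $Lg_* = 0$.

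Third, I close the contradiction by ruling out the nonzero limit $g_*$. The identity above characterizes $\mathrm{Ker}(L)$ as $g_*/\sqrt{\mu}$ being a collision invariant, i.e.\ $g_* \in \mathrm{span}\{\sqrt{\mu},\,v_i\sqrt{\mu},\,|v|^2\sqrt{\mu}\}$. Using the framework of the paper--where the perturbation $f$ is taken about the fixed Maxwellian background $\mu$ of the cosmological model \eqref{universe}, so that the mass, momentum and kinetic-energy fluctuations of $f$ are constrained to vanish--these five hydrodynamic directions are unavailable, forcing $g_* = 0$ and contradicting $\|g_*\|_\nu \geq c > 0$.

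The main obstacle is precisely this final rigidity step: since $L$ has a five-dimensional classical kernel, upgrading the standard spectral-gap estimate (which is intrinsically of the form $\langle Lg,g\rangle \geq \delta\|(I-P)g\|_\nu^2$) to the full-norm bound $\langle Lg,g\rangle \geq \delta\|g\|_\nu^2$ is possible only after enforcing, through the setup, that admissible $g$ carries no hydrodynamic projection. Verifying this structurally is where the non-routine work lies; the decomposition and compactness pieces are then standard.
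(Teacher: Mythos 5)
The paper does not prove this lemma itself: it states that Lemmas~\ref{lem1}--\ref{lem_est_gamma} ``are originally introduced in \cite{Guo031}\ldots We refer to \cite{Guo031,StrainGuo06} for the proofs.'' In those sources the coercivity estimate is proved in the form $\langle Lg,g\rangle\geq\delta\|(I-P)g\|_\nu^2$, where $P$ is the orthogonal projection onto the five-dimensional kernel of $L$ spanned by $\sqrt{\mu},\,v_i\sqrt{\mu},\,|v|^2\sqrt{\mu}$. The statement as printed in the paper, with $\|g\|_\nu^2$ on the right, is simply false for general $g$: taking $g=\sqrt{\mu}$ gives $\langle Lg,g\rangle=0$ while $\|g\|_\nu>0$. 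Your review attempt in fact diagnoses exactly this tension in its closing paragraph, but then proposes a resolution that does not work.

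The first two steps of your argument, the $K=K_c+K_s$ splitting from Lemma~\ref{lem2} and the compactness/contradiction scheme, are precisely the route of \cite{Guo031,StrainGuo06} and are sound modulo the final rigidity step. The gap is there: having produced a nonzero weak limit $g_*\in\mathrm{Ker}(L)$, you write that ``the framework constrains $f$ to have zero hydrodynamic projection,'' and on that basis conclude $g_*=0$. But the lemma is a statement about the bilinear form $\langle L\cdot,\cdot\rangle$ acting on arbitrary $g\in L^2_\nu$, not about the particular solution $f$; the sequence $\{g_n\}$ in the contradiction argument is not subject to any conservation-law constraint. Indeed, the counterexample $g_n\equiv\sqrt{\mu}/\|\sqrt{\mu}\|_\nu$ already breaks the argument: every step of your proof is consistent, the weak limit is $g_*=g_n\neq 0$, and no contradiction arises. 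The appeal to the dynamics cannot retroactively impose $Pg_n=0$ on an arbitrary sequence.

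The repair, and what the cited references actually do, is to prove the lemma in the projected form. One takes $g_n$ with $Pg_n=0$ and $\|g_n\|_\nu=1$ from the outset (passing to the range of $I-P$ is harmless since $\langle Lg,g\rangle=\langle L(I-P)g,(I-P)g\rangle$). The constraint $Pg_n=0$ survives the weak limit because $P$ is a finite-rank, hence weakly continuous, projection; combined with $g_*\in\mathrm{Ker}(L)$ this forces $g_*=0$, and then compactness of $K_c$ gives $\langle K_cg_n,g_n\rangle\to 0$, so $\langle Lg_n,g_n\rangle\geq 1-\eta-\langle K_cg_n,g_n\rangle\to 1-\eta>0$, the desired contradiction. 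If one then wishes to use the estimate in the full-norm form as the paper does, one must separately verify, and state as a hypothesis, that the solution $f$ under consideration satisfies $Pf(t)\equiv 0$ (e.g.\ via conservation laws and well-prepared initial data); that verification belongs to the application, not to the proof of the lemma.

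Two minor points: in your step two the arithmetic gives $\langle K_cg_*,g_*\rangle\geq 1-\eta$, not $1-2\eta$; and the symmetrized $H$-theorem identity you invoke is useful for showing $\langle Lg_*,g_*\rangle=0\Rightarrow g_*\in\mathrm{Ker}(L)$, but it is not needed for weak lower semicontinuity once $K_c$ is compact, since then $\langle Lg_n,g_n\rangle=\|g_n\|_\nu^2-\langle K_sg_n,g_n\rangle-\langle K_cg_n,g_n\rangle$ is handled term by term.
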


\begin{lemma}\label{lem4}
For any $\theta\in\bbr$, $\beta\geq 0$, $\eta>0$, and $k\geq 0$, there exist $C_\eta>0$ and $C_k>0$ satisfying the following estimates:
\begin{align}
&\langle w^{2\theta}\partial_\beta[\nu g],\partial_\beta g\rangle\geq\|w^{\theta}\partial_\beta g\|_\nu^2-\Big(\eta\sum_{|\beta_1|\leq|\beta|}\|w^\theta\partial_{\beta_1}g\|_\nu^2+C_\eta\|w^\theta g\|_\nu^2\Big),\label{lem_est_linear1}\\
&|\langle w^{2\theta}\partial_\beta[Kg_1],\partial_\beta g_2\rangle|\leq\Big(\eta\sum_{|\beta_1|\leq|\beta|}\|w^\theta\partial_{\beta_1}g_1\|_\nu+C_\eta\|w^\theta g_1\|_\nu\Big)\|w^\theta\partial_\beta g_2\|_\nu,\label{lem_est_linear2}\\
&\langle w^{-2k}L g, g\rangle\geq\frac12\|w^{-k} g\|_\nu^2-C_k\| g\|_\nu^2.\label{lem_est_linear3}
\end{align}
\end{lemma}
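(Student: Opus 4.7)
The plan is to prove the three estimates separately, each by reducing a weighted derivative inner product to a form to which Lemmas \ref{lem1}--\ref{lem_positivity} apply.

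For (\ref{lem_est_linear1}), the starting point is the Leibniz expansion $\partial_\beta(\nu g)=\sum_{\beta_1\leq\beta}\binom{\beta}{\beta_1}(\partial_{\beta_1}\nu)(\partial_{\beta-\beta_1}g)$. The summand $\beta_1=0$ contributes exactly $\|w^\theta\partial_\beta g\|_\nu^2$, the leading positive term. For $|\beta_1|\geq 1$ the crucial pointwise bound is $|\partial_{\beta_1}\nu(v)|\leq C\nu(v)$, which follows from the explicit formula for $\nu$ together with $\nu\simeq(1+|v|)^\gamma$ and $\gamma<0$: every $v$-derivative only accelerates the algebraic decay. Cauchy--Schwarz with this bound yields $|\langle w^{2\theta}\partial_{\beta_1}\nu\,\partial_{\beta-\beta_1}g,\partial_\beta g\rangle|\leq C\|w^\theta\partial_{\beta-\beta_1}g\|_\nu\|w^\theta\partial_\beta g\|_\nu$, and Young's inequality splits this between the top-order piece (absorbed back into the main term) and the lower-order pieces $\|w^\theta\partial_{\beta-\beta_1}g\|_\nu^2$, which are absorbed into the $\eta$-sum via a standard $v$-interpolation against $\|w^\theta g\|_\nu^2$.

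For (\ref{lem_est_linear2}) the same Leibniz idea applies but is more delicate because $v$ appears inside $|u-v|^\gamma$, inside the Gaussians $e^{-|v'|^2/2}$ and $e^{-|v|^2/2}$, and inside the collision transformation $v'=v+((u-v)\cdot\omega)\omega$. After expanding $\partial_\beta[Kg_1]$, each derivative either lands on a Gaussian (producing a polynomial factor damped by the remaining exponential), produces a more singular factor $|u-v|^{\gamma-1}$ (handled by the Strain--Guo decomposition of the integration region near $u=v$, or by a change of variables that transfers the derivative onto $g_1$ inside the integral), or falls directly on $g_1$. The result is a finite sum of operators $K_{\beta,\beta_1}$, each structurally of the same type as $K$, acting on $\partial_{\beta_1}g_1$ for $|\beta_1|\leq|\beta|$. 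Applying Lemma \ref{lem2} to each makes the contribution for $|\beta_1|\geq 1$ arbitrarily small ($\eta$-factor on the derivative norms of $g_1$), while the compact remainder together with the $\beta_1=0$ term produces only the $C_\eta\|w^\theta g_1\|_\nu$ factor on the right, after a final Cauchy--Schwarz against $\|w^\theta\partial_\beta g_2\|_\nu$.

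Estimate (\ref{lem_est_linear3}) is short: splitting $L=\nu-K$ gives
\[
\langle w^{-2k}Lg,g\rangle=\|w^{-k}g\|_\nu^2-\langle w^{-2k}Kg,g\rangle,
\]
and Lemma \ref{lem2} with $\theta=-k$ and $\eta=1/4$ lets $K_s$ be absorbed into $\tfrac12\|w^{-k}g\|_\nu^2$, while the rapid decay of the integral kernel of the compact part $K_c$ defeats the growing weight $w^{-2k}$ and yields $|\langle w^{-2k}K_cg,g\rangle|\leq C_k\|g\|_\nu^2$. The hard part of the whole plan is (\ref{lem_est_linear2}) in the soft-potential regime, since every $v$-derivative worsens the singularity of $|u-v|^\gamma$ at $u=v$; the essential technical input that closes the estimate is precisely the Strain--Guo sharpening of the near-diagonal region decomposition, which is why the original statements of \cite{Guo031} had to be re-derived in the form of \cite{StrainGuo06} invoked here.
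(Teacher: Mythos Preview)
The paper does not prove this lemma itself but simply refers to \cite{Guo031,StrainGuo06}; your sketch correctly captures the arguments found there (Leibniz plus $|\partial_{\beta_1}\nu|\leq C\nu$ and interpolation for \eqref{lem_est_linear1}, the derivative splitting of $K$ together with Lemma~\ref{lem2} for \eqref{lem_est_linear2}, and the $K=K_c+K_s$ decomposition with kernel decay beating the polynomial weight for \eqref{lem_est_linear3}). There is nothing further to compare, since you have filled in precisely what the paper chose to omit.
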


\begin{lemma}\label{lem_est_gamma}
For any $\theta\in\bbr$, $\beta\geq 0$, and $k\geq 0$, the following estimates hold:
\begin{align}
&|\langle w^{2\theta}\partial_\beta \Gamma_+(g_1,g_2),\partial_\beta g_3\rangle|\leq C\Big(\sum_{|\beta_1|\leq N}\|w^\theta\partial_{\beta_1}g_1\|_\nu\Big)\Big(\sum_{|\beta_2|\leq N}\|w^\theta\partial_{\beta_2}g_2\|_\nu\Big)\|w^\theta\partial_\beta g_3\|_\nu\label{lem_est_gamma1}\\
&\quad\quad + C\Big(\sum_{|\beta_1|\leq N}\|w^{|\beta_1|-k}\partial_{\beta_1}g_1\|\Big)\Big(\sum_{|\beta_1|+|\beta_2|\leq |\beta|}\|w^{\theta-|\beta_1|+k}\partial_{\beta_2}g_2\|_\nu\Big)\|w^\theta\partial_\beta g_3\|_\nu,\nonumber\\
&|\langle w^{2\theta}\partial_\beta \Gamma_-(g_1,g_2),\partial_\beta g_3\rangle|\leq C\Big(\sum_{|\beta_1|\leq N}\|w^\theta\partial_{\beta_1}g_1\|_\nu\Big)\Big(\sum_{|\beta_2|\leq N}\|w^\theta\partial_{\beta_2}g_2\|_\nu\Big)\|w^\theta\partial_\beta g_3\|_\nu.\label{lem_est_gamma2}
\end{align}
\end{lemma}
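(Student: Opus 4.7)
The plan is to reduce both estimates to pointwise inequalities on the integrand of $\langle w^{2\theta}\partial_\beta\Gamma_\pm(g_1,g_2),\partial_\beta g_3\rangle$ and then close by Cauchy–Schwarz against $\|w^\theta\partial_\beta g_3\|_\nu$. First I would distribute the derivative $\partial_\beta$ by Leibniz. For $\Gamma_-$, after the translation $u\mapsto u+v$ the kernel $|u-v|^\gamma$ becomes $v$-independent, and $\partial_\beta$ falls only on the shifted Gaussian $e^{-|u+v|^2/2}$ and on $g_2(v)$; this produces a finite sum of $\Gamma_-$-type integrals involving $\partial_{\beta_1}g_1$ and $\partial_{\beta_2}g_2$ with $\beta_1+\beta_2\leq\beta$, multiplied by polynomial weights coming from derivatives of the Gaussian. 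For $\Gamma_+$ I use that $v'=v+((u-v)\cdot\omega)\omega$ and $u'=u-((u-v)\cdot\omega)\omega$ are linear in $v$, so each Leibniz factor produces at most one derivative on $g_1(u')$ or $g_2(v')$ together with a bounded coefficient depending on $\omega$.

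Next I would apply Cauchy–Schwarz in the triple integral $du\,dv\,d\omega$ against the majorant $|u-v|^\gamma B(\omega)e^{-|u|^2/4}$, keeping half of the Maxwellian to absorb the large-$u$ tails. Together with the classical soft-potential estimate $\int|u-v|^\gamma e^{-|u|^2/4}\,du\leq C(1+|v|)^\gamma\sim\nu(v)$, this converts the remaining factor of $\partial_\beta g_3(v)$ into $\|w^\theta\partial_\beta g_3\|_\nu$ and pairs the $\partial_{\beta_1}g_1$ and $\partial_{\beta_2}g_2$ factors with their $\nu$-weighted $L^2$ norms, which yields \eqref{lem_est_gamma2} immediately and gives the first line of \eqref{lem_est_gamma1} once the weight $w^{2\theta}(v)$ is compared to $w^{2\theta}(v')w^{2\theta}(u')$ using the pre/post-collisional change of variables $(u,v,\omega)\leftrightarrow(u',v',\omega)$ (Jacobian $1$).

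For the second term of \eqref{lem_est_gamma1}, which is the characteristic feature of the gain term in the weighted high-order soft-potential setting, I would split one of the two factors in $L^\infty_v$ via the Sobolev embedding $H^2(\bbr^3_v)\hookrightarrow L^\infty(\bbr^3_v)$ (this is where $N\geq 4$ is used, so that all $\partial_{\beta_1}g_1$ with $|\beta_1|\leq N$ fit in the low-norm factor $\sum_{|\beta_1|\leq N}\|w^{|\beta_1|-k}\partial_{\beta_1}g_1\|$). The price for putting $g_1$ in $L^\infty$ without the $\nu$-weight is that the remaining $v$-weight on $g_2$ must be redistributed; this is the role of the shift $\theta\mapsto\theta-|\beta_1|+k$ in the $g_2$ factor, which records exactly how many powers of $w$ are transferred from $g_1$ to $g_2$ when one writes $w^\theta(v)=w^{|\beta_1|-k}(u)\cdot w^{\theta-|\beta_1|+k}(v')\cdot\bigl(\text{bounded factor}\bigr)$ along the collisional change of variables. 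The Gaussian in $u$ then absorbs the polynomial loss incurred whenever this identity fails to be sharp.

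The main obstacle I anticipate is the last step: making the weight exchange $w^\theta(v)\rightsquigarrow w^{|\beta_1|-k}(u)w^{\theta-|\beta_1|+k}(v')$ uniform across the support of the kernel, both when $\theta<0$ (the natural sign for soft potentials) and when $|\beta_1|$ is large. The cure is the standard dichotomy $|u|\leq\tfrac12|v|$ versus $|u|\geq\tfrac12|v|$: in the former regime $|v'|\sim|v|\sim|u'|$ and the weight inequality is trivial, while in the latter the factor $e^{-|u|^2/2}$ beats any polynomial, so half of the Maxwellian can be spent on the weight deficit and the other half on keeping $\int|u-v|^\gamma e^{-|u|^2/4}du\leq C\nu(v)$. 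Once this splitting, inherited from \cite{Guo031,StrainGuo06}, is in place, both \eqref{lem_est_gamma1} and \eqref{lem_est_gamma2} follow by assembling the pieces.
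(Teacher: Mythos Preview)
Your plan follows the same route as the paper (Leibniz expansion, Cauchy--Schwarz against the majorant $|u-v|^\gamma e^{-|u|^2/4}$, Sobolev $H^2\hookrightarrow L^\infty$ with $N\geq 4$, the dichotomy $|u|\lessgtr\tfrac12|v|$, and the collisional change of variables), but two points need correcting.

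For the loss term, a single Cauchy--Schwarz in $(u,v)$ does \emph{not} pair $g_1$ and $g_2$ with their $L^2_\nu$ norms: it leaves them coupled in $\int|u-v|^\gamma e^{-|u|^2/4}|\partial_{\beta_1}g_1(u)|^2|\partial_{\beta_2}g_2(v)|^2\,du\,dv$. To separate them the paper first puts the factor carrying fewer derivatives (whichever of $|\beta_1|,|\beta_2|$ is $\leq N/2$) into $L^\infty$ via Sobolev, and in the case $|\beta_2|\leq N/2$ it still needs a threefold domain split $|u-v|\leq|v|/2$, $\{|u-v|\geq|v|/2,\ |v|\geq1\}$, $\{|u-v|\geq|v|/2,\ |v|\leq1\}$ to control the singular kernel. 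Your sketch says this step is ``immediate''; it is not, and the Sobolev embedding you reserve for the gain term is already needed here.

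For the gain term you have the mechanisms behind the two lines of \eqref{lem_est_gamma1} reversed. The \emph{first} line (both $g_i$ measured in $\|\cdot\|_\nu$) is what the Sobolev embedding produces, in the regions $|u|\geq|v|/2$ and $\{|u|\leq|v|/2,\ |v|\leq1\}$, where a Gaussian in $v$ or a lower bound on $w$ lets you take a sup on the low-derivative factor. The \emph{second} line, with $g_1$ in the plain $L^2$ norm $\|w^{|\beta_1|-k}\partial_{\beta_1}g_1\|$, comes from the remaining region $\{|u|\leq|v|/2,\ |v|\geq1\}$ and uses \emph{no} Sobolev at all: there one bounds $|u-v|^\gamma\leq C(1+|v|)^\gamma$, applies Cauchy--Schwarz and the change of variables $du\,dv=du'\,dv'$, and the weight transfer $w^{2\theta}(v)(1+|v|)^\gamma\leq C\,w^{2(|\beta_1|-k)}(u')\,w^{2(\theta-|\beta_1|+k)}(v')(1+|v'|)^\gamma$ (valid because $|u'|,|v'|\leq C|v|$ in that region) is what generates the shifted exponents. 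Placing $g_1$ in $L^\infty$ as you propose cannot output an $L^2$ norm, and would also be unavailable when $|\beta_1|>N-2$; so your argument for the second line, as written, would not close.
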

\begin{proof}
The proof of this lemma is basically the same with the one in \cite{Guo031}, where the estimate of $\Gamma$ was considered for $\theta\geq 0$, $k=0$, and $N\geq 8$. To make it clear that the lemma still holds for any $\theta\in\bbr$ with $k\geq 0$ and $N\geq 4$, we briefly present the proof of this lemma. Here, we only consider the derivatives with respect to $v$. Note that the gain and the loss terms are written as follows:
\begin{align*}
\partial_\beta \Gamma_\pm(g_1,g_2)&=\sum_{\beta_0+\beta_1+\beta_2=\beta}\Gamma^0_\pm(\partial_{\beta_1}g_1,\partial_{\beta_2}g_2),
\end{align*}
where
\begin{align*}
\Gamma^0_+(\partial_{\beta_1}g_1,\partial_{\beta_2}g_2)&=\int |u-v|^\gamma B(\omega)\partial_{\beta_0}\Big[e^{-\frac12 |u|^2}\Big](\partial_{\beta_1}g_1)(u')(\partial_{\beta_2}g_2)(v')d\omega du,\\
\Gamma^0_-(\partial_{\beta_1}g_1,\partial_{\beta_2}g_2)&=\int |u-v|^\gamma B(\omega)\partial_{\beta_0}\Big[e^{-\frac12 |u|^2}\Big](\partial_{\beta_1}g_1)(u)(\partial_{\beta_2}g_2)(v)d\omega du.
\end{align*}
Below, we estimate the gain and the loss terms separately. For $k\geq 2$, we know that
\[
H^k(\bbr^3)\subset C(\bbr^3)\cap L^\infty(\bbr^3),
\]
and this will be frequently used in the estimates.\\

\noindent{\bf Loss term.} We first prove the estimate for the loss term. Since $\beta_0+\beta_1+\beta_2=\beta$ for $|\beta|\leq N$, we have either (a) $|\beta_1|\leq N/2$ or (b) $|\beta_2|\leq N/2$. For (a), we note that
\begin{align*}
&|\Gamma^0_-(\partial_{\beta_1}g_1,\partial_{\beta_2}g_2)|\\
&\quad\leq C\int |u-v|^\gamma e^{-\frac14|u|^2}|\partial_{\beta_1}g_1(u)||\partial_{\beta_2}g_2(v)|du\\
&\quad\leq C\Big(\sup_u e^{-\frac18|u|^2}|\partial_{\beta_1}g_1(u)|\Big)\bigg(\int |u-v|^\gamma e^{-\frac18|u|^2}du\bigg)|\partial_{\beta_2}g_2(v)|\allowdisplaybreaks\\
&\quad\leq C\Big(\sum_{|\beta_1|\leq N}\|w^\theta\partial_{\beta_1}g_1\|_\nu\Big)(1+|v|)^\gamma|\partial_{\beta_2}g_2(v)|.
\end{align*}
Therefore, the loss term in the case of (a) is estimated as follows:
\begin{align*}
&|\langle w^{2\theta}\Gamma^0_-(\partial_{\beta_1}g_1,\partial_{\beta_2}g_2),\partial_\beta g_3\rangle|\\
&\quad\leq C\int w^{2\theta}(v)\Big(\sum_{|\beta_1|\leq N}\|w^\theta\partial_{\beta_1}g_1\|_\nu\Big)(1+|v|)^\gamma|\partial_{\beta_2}g_2(v)||\partial_\beta g_3(v)| dv\allowdisplaybreaks\\
&\quad\leq C\Big(\sum_{|\beta_1|\leq N}\|w^\theta\partial_{\beta_1}g_1\|_\nu\Big)\\
&\quad\quad\times\bigg(\int w^{2\theta}(v)(1+|v|)^\gamma|\partial_{\beta_2}g_2(v)|^2dv\bigg)^{\frac12}\bigg(\int w^{2\theta}(v)(1+|v|)^\gamma|\partial_{\beta}g_3(v)|^2dv\bigg)^{\frac12}\allowdisplaybreaks\\
&\quad\leq C\Big(\sum_{|\beta_1|\leq N}\|w^\theta\partial_{\beta_1}g_1\|_\nu\Big)\|w^\theta\partial_{\beta_2}g_2\|_\nu\|w^\theta\partial_{\beta}g_3\|_\nu\\
&\quad\leq C\Big(\sum_{|\beta_1|\leq N}\|w^\theta\partial_{\beta_1}g_1\|_\nu\Big)\Big(\sum_{|\beta_2|\leq N}\|w^\theta\partial_{\beta_2}g_2\|_\nu\Big)\|w^\theta\partial_\beta g_3\|_\nu,
\end{align*}
and summing over all the possible multi-indices $\beta_0$, $\beta_1$, and $\beta_2$, we have the estimate \eqref{lem_est_gamma2}.
For (b), we decompose the integration domain $\bbr^6$ for $u$ and $v$ into three cases: (b.1) $|u-v|\leq |v|/2$, (b.2) $|u-v|\geq |v|/2$ with $|v|\geq 1$, and (b.3) $|u-v|\geq |v|/2$ with $|v|\leq 1$. For (b.1), we use
\[
|u|\geq \frac{1}{2}|v|\quad\mbox{and}\quad e^{-\frac{1}{8}|u|^2}\leq e^{-\frac{1}{32}|v|^2}\quad\mbox{for (b.1)}.
\]
Then, we have
\begin{align}
&|\langle w^{2\theta}\Gamma^0_-(\partial_{\beta_1}g_1,\partial_{\beta_2}g_2),\partial_\beta g_3\rangle|\nonumber\\
&\quad\leq C\int w^{2\theta}(v)|u-v|^\gamma e^{-\frac{1}{4}|u|^2}|\partial_{\beta_1}g_1(u)||\partial_{\beta_2}g_2(v)||\partial_{\beta}g_3(v)|du dv\nonumber\\
&\quad\leq C\Big(\sup_v e^{-\frac{1}{64}|v|^2}|\partial_{\beta_2}g_2(v)|\Big)\int w^{2\theta}(v)|u-v|^\gamma e^{-\frac{1}{8}|u|^2}e^{-\frac{1}{64}|v|^2}|\partial_{\beta_1}g_1(u)||\partial_{\beta}g_3(v)|dudv\nonumber\allowdisplaybreaks\\
&\quad\leq C \Big(\sum_{|\beta_2|\leq N}\|w^\theta\partial_{\beta_2}g_2\|_\nu\Big)\nonumber\\
&\quad\quad\times\bigg(\int w^{2\theta}(v)|u-v|^\gamma e^{-\frac{1}{8}|u|^2}e^{-\frac{1}{32}|v|^2}|\partial_{\beta_1}g_1(u)|^2 dudv\bigg)^{\frac{1}{2}}\nonumber\\
&\quad\quad\times\bigg(\int w^{2\theta}(v)|u-v|^\gamma e^{-\frac{1}{8}|u|^2}|\partial_{\beta} g_3(v)|^2 dudv\bigg)^{\frac{1}{2}}\nonumber\allowdisplaybreaks\\
&\quad\leq C\Big(\sum_{|\beta_2|\leq N}\|w^\theta\partial_{\beta_2}g_2\|_\nu\Big)\nonumber\\
&\quad\quad\times\bigg(\int (1+|u|)^\gamma e^{-\frac{1}{8}|u|^2}|\partial_{\beta_1}g_1(u)|^2 du\bigg)^{\frac{1}{2}}
\bigg(\int w^{2\theta}(v)(1+|v|)^\gamma |\partial_{\beta} g_3(v)|^2 dv\bigg)^{\frac{1}{2}}\nonumber\\
&\quad\leq C\Big(\sum_{|\beta_2|\leq N}\|w^\theta\partial_{\beta_2}g_2\|_\nu\Big)\|w^{\theta} \partial_{\beta_1}g_1\|_\nu \|w^{\theta} \partial_\beta g_3\|_\nu,\label{lem_est_gamma_b.1}
\end{align}
where we used the fact that $N\geq 4$ and $|\beta_2|\leq N/2$ in the third inequality.
For (b.2), we note that since $\gamma$ is negative,
\[
|u-v|^\gamma\leq |v|^\gamma \leq C(1+|v|)^\gamma\quad\mbox{for (b.2).}
\]
Then, we have
\begin{align}
&|\langle w^{2\theta}\Gamma^0_-(\partial_{\beta_1}g_1,\partial_{\beta_2}g_2),\partial_\beta g_3\rangle|\nonumber\\
&\quad\leq C\int w^{2\theta}(v)|u-v|^\gamma e^{-\frac{1}{4}|u|^2}|\partial_{\beta_1}g_1(u)||\partial_{\beta_2}g_2(v)||\partial_{\beta}g_3(v)|dudv\nonumber\allowdisplaybreaks\\
&\quad\leq C\int w^{2\theta}(v)(1+|v|)^\gamma e^{-\frac{1}{4}|u|^2}|\partial_{\beta_1}g_1(u)||\partial_{\beta_2}g_2(v)||\partial_{\beta}g_3(v)|dudv\nonumber\\
&\quad\leq C\bigg(\int e^{-\frac{1}{4}|u|^2}|\partial_{\beta_1}g_1(u)|du\bigg)\int w^{2\theta}(v)(1+|v|)^\gamma |\partial_{\beta_2}g_2(v)||\partial_{\beta}g_3(v)|dv\nonumber\allowdisplaybreaks\\
&\quad\leq C\bigg(\int e^{-\frac{1}{4}|u|^2}du\bigg)^{\frac12}\bigg(\int e^{-\frac{1}{4}|u|^2}|\partial_{\beta_1}g_1(u)|^2du\bigg)^{\frac12}\nonumber\\
&\quad\quad\times\bigg(\int w^{2\theta}(v)(1+|v|)^\gamma |\partial_{\beta_2}g_2(v)|^2dv\bigg)^{\frac12}
\bigg(\int w^{2\theta}(v)(1+|v|)^\gamma |\partial_{\beta}g_3(v)|^2dv\bigg)^{\frac12}\nonumber\\
&\quad\leq C\|w^{\theta}\partial_{\beta_1}g_1\|_\nu \|w^{\theta}\partial_{\beta_2}g_2\|_\nu \|w^{\theta}\partial_{\beta}g_3\|_\nu.\label{lem_est_gamma_b.2}
\end{align}
For (b.3), we use
\[
|u-v|^\gamma\leq C|v|^\gamma\quad\mbox{and}\quad C\leq (1+|v|)^\gamma\leq 1\quad\mbox{for (b.3)},
\]
and
\begin{align*}
&\int |u-v|^{\frac{\gamma}{2}} e^{-\frac{1}{4}|u|^2}|\partial_{\beta_1}g_1(u)|du\\
&\quad\leq \bigg(\int e^{-\frac14|u|^2}|u-v|^\gamma du\bigg)^{\frac12}\bigg(\int e^{-\frac14|u|^2}|\partial_{\beta_1}g_1(u)|^2 du\bigg)^{\frac12}\\
&\quad\leq C \|w^{\theta}\partial_{\beta_1}g_1\|_\nu.
\end{align*}
The estimate for (b.3) case is given as follows:
\begin{align}
&|\langle w^{2\theta}\Gamma^0_-(\partial_{\beta_1}g_1,\partial_{\beta_2}g_2),\partial_\beta g_3\rangle|\nonumber\\
&\quad\leq C\int w^{2\theta}(v)|u-v|^\gamma e^{-\frac{1}{4}|u|^2}|\partial_{\beta_1}g_1(u)||\partial_{\beta_2}g_2(v)||\partial_{\beta}g_3(v)|dudv\nonumber\\
&\quad\leq C\|w^{\theta}\partial_{\beta_1}g_1\|_\nu\int w^{2\theta}(v)|v|^\frac{\gamma}{2} |\partial_{\beta_2}g_2(v)||\partial_{\beta}g_3(v)|dv\nonumber\allowdisplaybreaks\\
&\quad\leq C\|w^{\theta}\partial_{\beta_1}g_1\|_\nu\bigg(\int w^{2\theta}(v)|v|^{\gamma} |\partial_{\beta_2}g_2(v)|^2dv\bigg)^{\frac12}\bigg(\int w^{2\theta}(v)|\partial_{\beta}g_3(v)|^2dv\bigg)^{\frac12}\nonumber\\
&\quad\leq C\|w^{\theta}\partial_{\beta_1}g_1\|_\nu\Big(\sup_vw^{\theta}(v)(1+|v|)^{\frac{\gamma}{2}} |\partial_{\beta_2}g_2(v)|\Big)\bigg(\int |v|^{\gamma} dv\bigg)^{\frac12}\|w^{\theta}\partial_\beta g_3\|_\nu\nonumber\\
&\quad\leq C\|w^{\theta}\partial_{\beta_1}g_1\|_\nu \|w^{\theta}\partial_{\beta_2}g_2\|_\nu\|w^{\theta}\partial_\beta g_3\|_\nu,\label{lem_est_gamma_b.3}
\end{align}
where we used the fact that $(1+|v|)^{\gamma}$ is bounded from below. We now combine \eqref{lem_est_gamma_b.1}--\eqref{lem_est_gamma_b.3} and sum over all the possible multi-indices to obtain the estimate \eqref{lem_est_gamma2}. This completes the proof of the case (b) and the proof of the estimate for the loss term as well.\\

\noindent{\bf Gain term.} For the estimate of the gain term we decompose the integration domain $\bbr^6$ for $u$ and $v$ into three parts: (c.1) $|u|\geq |v|/2$, (c.2) $|u|\leq |v|/2$ with $|v|\geq 1$, and (c.3) $|u|\leq |v|/2$ with $|v|\leq 1$. For (c.1), we use
\[
e^{-\frac18|u|^2}\leq e^{-\frac{1}{32}|v|^2}\quad\mbox{for (c.1).}
\]
Then, we have the following estimate:
\begin{align}
&|\langle w^{2\theta}\Gamma^0_+(\partial_{\beta_1}g_1,\partial_{\beta_2}g_2),\partial_\beta g_3\rangle|\nonumber\\
&\quad\leq C\int w^{2\theta}(v)|u-v|^\gamma B(\omega) e^{-\frac{1}{4}|u|^2}|\partial_{\beta_1}g_1(u')||\partial_{\beta_2}g_2(v')||\partial_{\beta}g_3(v)|d\omega dudv\nonumber\\
&\quad\leq C\int w^{2\theta}(v)|u-v|^\gamma e^{-\frac{1}{8}|u|^2} e^{-\frac{1}{32}|v|^2}|\partial_{\beta_1}g_1(u')||\partial_{\beta_2}g_2(v')||\partial_{\beta}g_3(v)|d\omega dudv\nonumber\allowdisplaybreaks\\
&\quad\leq C\bigg(\int w^{2\theta}(v)|u-v|^\gamma e^{-\frac{1}{8}|u|^2} e^{-\frac{1}{16}|v|^2}|\partial_{\beta_1}g_1(u')|^2|\partial_{\beta_2}g_2(v')|^2d\omega dudv\bigg)^{\frac12}\nonumber\\
&\quad\quad\times\bigg(\int w^{2\theta}(v)|u-v|^\gamma e^{-\frac{1}{8}|u|^2} |\partial_{\beta}g_3(v)|^2d\omega dudv\bigg)^{\frac12}\nonumber\allowdisplaybreaks\\
&\quad\leq C\bigg(\int |u-v|^\gamma e^{-\frac{1}{32}|u|^2} e^{-\frac{1}{32}|v|^2}|\partial_{\beta_1}g_1(u')|^2|\partial_{\beta_2}g_2(v')|^2d\omega dudv\bigg)^{\frac12}\label{lem_est_gamma_c.1.1}\\
&\quad\quad\times\bigg(\int w^{2\theta}(v)(1+|v|)^\gamma |\partial_{\beta}g_3(v)|^2 dv\bigg)^{\frac12},\label{lem_est_gamma_c.1.2}
\end{align}
where the last inequality holds for any $\theta\in\bbr$. Note that the quantity \eqref{lem_est_gamma_c.1.2} is equal to $\|w^\theta\partial_\beta g_3\|_\nu$. For the quantity \eqref{lem_est_gamma_c.1.1}, we use the change of variables $dudv=du'dv'$ and may assume that $|\beta_1|\leq N/2$ without loss of generality. Then, \eqref{lem_est_gamma_c.1.1} is estimated as follows:
\begin{align}
&\bigg(\int |u-v|^\gamma e^{-\frac{1}{32}|u|^2} e^{-\frac{1}{32}|v|^2}|\partial_{\beta_1}g_1(u')|^2|\partial_{\beta_2}g_2(v')|^2d\omega dudv\bigg)^{\frac12}\nonumber\\
&\quad\leq C\bigg(\int |u-v|^\gamma e^{-\frac{1}{32}|u|^2} e^{-\frac{1}{32}|v|^2}|\partial_{\beta_1}g_1(u)|^2|\partial_{\beta_2}g_2(v)|^2dudv\bigg)^{\frac12}\nonumber\\
&\quad\leq C\Big(\sup_u e^{-\frac{1}{128}|u|^2}|\partial_{\beta_1}g_1(u)|\Big)\bigg(\int |u-v|^\gamma e^{-\frac{1}{64}|u|^2} e^{-\frac{1}{32}|v|^2}|\partial_{\beta_2}g_2(v)|^2dudv\bigg)^{\frac12}\nonumber\allowdisplaybreaks\\
&\quad\leq C\Big(\sum_{|\beta_1|\leq N}\|w^\theta \partial_{\beta_1}g_1\|_\nu\Big)\bigg(\int (1+|v|)^\gamma e^{-\frac{1}{32}|v|^2}|\partial_{\beta_2}g_2(v)|^2dv\bigg)^{\frac12}\nonumber\\
&\quad\leq C\Big(\sum_{|\beta_1|\leq N}\|w^\theta \partial_{\beta_1}g_1\|_\nu\Big)\|w^\theta\partial_{\beta_2}g_2\|_\nu,\nonumber
\end{align}
and we conclude for the (c.1) case,
\begin{align}
&|\langle w^{2\theta}\Gamma^0_+(\partial_{\beta_1}g_1,\partial_{\beta_2}g_2),\partial_\beta g_3\rangle|\nonumber\\
&\quad\leq C\Big(\sum_{|\beta_1|\leq N}\|w^\theta \partial_{\beta_1}g_1\|_\nu\Big)\|w^\theta\partial_{\beta_2}g_2\|_\nu\|w^\theta\partial_\beta g_3\|_\nu.\label{lem_est_gamma_c.1}
\end{align}
For (c.2), we use the following basic inequalities:
\begin{gather*}
|u-v|\geq \frac12 |v|,\quad |v|^\gamma\leq C(1+|v|)^\gamma,\quad |u'|+|v'|\leq C(|u|+|v|)\leq C|v|,\\
(1+|v|)^\gamma\leq C(1+|u'|)^\gamma,\quad(1+|v|)^\gamma\leq C(1+|v'|)^\gamma,
\end{gather*}
to get the following estimate:
\begin{align}
&|\langle w^{2\theta}\Gamma^0_+(\partial_{\beta_1}g_1,\partial_{\beta_2}g_2),\partial_\beta g_3\rangle|\nonumber\\
&\quad\leq C\int w^{2\theta}(v)|u-v|^\gamma B(\omega) e^{-\frac{1}{4}|u|^2}|\partial_{\beta_1}g_1(u')||\partial_{\beta_2}g_2(v')||\partial_{\beta}g_3(v)|d\omega dudv\nonumber\allowdisplaybreaks\\
&\quad\leq C\bigg(\int w^{2\theta}(v)|u-v|^\gamma |\partial_{\beta_1}g_1(u')|^2|\partial_{\beta_2}g_2(v')|^2d\omega du dv\bigg)^{\frac12}\nonumber\\
&\qquad\times\bigg(\int w^{2\theta}(v)|u-v|^\gamma e^{-\frac12|u|^2}|\partial_{\beta}g_3(v)|^2d\omega dudv\bigg)^{\frac12}\nonumber\allowdisplaybreaks\\
&\quad\leq C\bigg(\int w^{2\theta}(v)(1+|v|)^\gamma |\partial_{\beta_1}g_1(u')|^2|\partial_{\beta_2}g_2(v')|^2d\omega du dv\bigg)^{\frac12}\nonumber\\
&\qquad\times\bigg(\int w^{2\theta}(v)(1+|v|)^\gamma |\partial_{\beta}g_3(v)|^2dv\bigg)^{\frac12}\nonumber\allowdisplaybreaks\\
&\quad\leq C\bigg(\int w^{2|\beta_1|-2k}(u')w^{2\theta-2|\beta_1|+2k}(v')(1+|v'|)^\gamma |\partial_{\beta_1}g_1(u')|^2|\partial_{\beta_2}g_2(v')|^2d\omega du dv\bigg)^{\frac12}\nonumber\\
&\qquad\times\|w^\theta \partial_\beta g_3\|_\nu\nonumber\allowdisplaybreaks\\
&\quad\leq C\|w^{|\beta_1|-k}\partial_{\beta_1}g_1\|\|w^{\theta-|\beta_1|+k}\partial_{\beta_2}g_2\|_\nu\|w^\theta \partial_\beta g_3\|_\nu.\label{lem_est_gamma_c.2}
\end{align}
For (c.3), we use the following inequalities:
\begin{gather*}
|u|\leq\frac12,\quad  C\leq (1+|v|)^\gamma\leq C(1+|u|)^\gamma,\quad |u-v|\geq \frac12 |v|,\\
|u'|+|v'|\leq C(|u|+|v|)\leq C|v|,\quad (1+|v|)^\gamma\leq C(1+|u'|)^\gamma,\quad(1+|v|)^\gamma\leq C(1+|v'|)^\gamma,
\end{gather*}
to get the following estimate:
\begin{align}
&|\langle w^{2\theta}\Gamma^0_+(\partial_{\beta_1}g_1,\partial_{\beta_2}g_2),\partial_\beta g_3\rangle|\nonumber\\
&\quad\leq C\int w^{2\theta}(v)|u-v|^\gamma B(\omega) e^{-\frac{1}{4}|u|^2}|\partial_{\beta_1}g_1(u')||\partial_{\beta_2}g_2(v')||\partial_{\beta}g_3(v)|d\omega dudv\nonumber\allowdisplaybreaks\\
&\quad\leq C\int |u-v|^\gamma |\partial_{\beta_1}g_1(u')||\partial_{\beta_2}g_2(v')||\partial_{\beta}g_3(v)|d\omega dudv\nonumber\allowdisplaybreaks\\
&\quad\leq C\bigg(\int |v|^\gamma |\partial_{\beta_1}g_1(u')|^2|\partial_{\beta_2}g_2(v')|^2d\omega du dv\bigg)^{\frac12}\bigg(\int |u-v|^\gamma |\partial_{\beta}g_3(v)|^2d\omega dudv\bigg)^{\frac12}.\nonumber
\end{align}
Since we have $|v|^\gamma\leq C|u'|^\gamma$ and $|v|^\gamma\leq C|v'|^\gamma$, we may assume that $|\beta_1|\leq N/2$ without loss of generality. Then, the first integral on the right hand side is estimated as follows:
\begin{align}
&\bigg(\int |v|^\gamma |\partial_{\beta_1}g_1(u')|^2|\partial_{\beta_2}g_2(v')|^2d\omega du dv\bigg)^{\frac12}\nonumber\\
&\quad\leq C\bigg(\int |u'|^\gamma |\partial_{\beta_1}g_1(u')|^2|\partial_{\beta_2}g_2(v')|^2d\omega du dv\bigg)^{\frac12}\nonumber\allowdisplaybreaks\\
&\quad\leq C\bigg(\int |u|^\gamma |\partial_{\beta_1}g_1(u)|^2|\partial_{\beta_2}g_2(v)|^2 du dv\bigg)^{\frac12}\nonumber\allowdisplaybreaks\\
&\quad\leq C\Big(\sup_u w^\theta(u)(1+|u|)^{\frac{\gamma}{2}}|\partial_{\beta_1}g_1(u)|\Big)\bigg(\int |u|^\gamma |\partial_{\beta_2}g_2(v)|^2 du dv\bigg)^{\frac12}\nonumber\allowdisplaybreaks\\
&\quad\leq C\Big(\sum_{|\beta_1|\leq N}\|w^\theta\partial_{\beta_1}g_1\|_\nu\Big)\bigg(\int |\partial_{\beta_2}g_2(v)|^2 dv\bigg)^{\frac12}\nonumber\\
&\quad\leq C\Big(\sum_{|\beta_1|\leq N}\|w^\theta\partial_{\beta_1}g_1\|_\nu\Big)\|w^\theta\partial_{\beta_2}g_2\|_\nu,\label{lem_est_gamma_c.3.1}
\end{align}
and the second integral is estimated as
\begin{align}
&\bigg(\int |u-v|^\gamma |\partial_{\beta}g_3(v)|^2d\omega dudv\bigg)^{\frac12}\leq C\bigg(\int |\partial_{\beta}g_3(v)|^2dv\bigg)^{\frac12}\leq C\|w^\theta\partial_{\beta}g_3\|_\nu.\label{lem_est_gamma_c.3.2}
\end{align}
We combine \eqref{lem_est_gamma_c.3.1} and \eqref{lem_est_gamma_c.3.2} to conclude for the (c.3) case
\begin{align}
&|\langle w^{2\theta}\Gamma^0_+(\partial_{\beta_1}g_1,\partial_{\beta_2}g_2),\partial_\beta g_3\rangle|\nonumber\\
&\quad\leq C\Big(\sum_{|\beta_1|\leq N}\|w^\theta\partial_{\beta_1}g_1\|_\nu\Big)\|w^\theta\partial_{\beta_2}g_2\|_\nu\|w^\theta\partial_{\beta}g_3\|_\nu.\label{lem_est_gamma_c.3}
\end{align}
We finally combine \eqref{lem_est_gamma_c.1}, \eqref{lem_est_gamma_c.2}, and \eqref{lem_est_gamma_c.3} and sum over all the possible multi-indices $\beta_0$, $\beta_1$, and $\beta_2$ to obtain the estimate \eqref{lem_est_gamma1}. This completes the proof of the estimate for the gain term.
\end{proof}

\section{Main results}
We apply the well-known methods of \cite{Guo031,StrainGuo06} to prove global existence and to obtain asymptotic behaviour. The results will show that the universe model \eqref{universe} is asymptotically stable.
\subsection{Global-in-time existence}
Local-in-time existence is easily proved by the energy method of \cite{Guo031}: We consider a standard iteration $\{f^n\}$ and apply Lemmas \ref{lem1}, \ref{lem2}, \ref{lem4}, and \ref{lem_est_gamma} appropriately, then obtain a differential inequality for $\|w^{|\beta|-k}\partial_\beta f^n\|$, where the smallness of $\eta$ may be ignored. For each $k\geq 0$, summing over all $|\beta|\leq N$, we obtain an estimate of $d(\+f^{n}\+_k^2/2)/dt+a_\gamma\+f^{n}\+_{\nu,k}^2$, and then show that the iteration $\{E^n_k\}$ is uniformly bounded on a (short) time interval, where $E^n_k$ is defined by $E_k$ with $f^n$ in place of $f$. As in \cite{Guo031}, the local-in-time existence is proved, and uniqueness, continuity of $E_k$, and the non-negativity of $F$ are also similarly proved. We obtain the following result.
\begin{lemma}\label{lem_local}
Let $k\geq 0$ be an integer, $f_0$ be an initial data of the equation \eqref{boltz} satisfying $F_0=\mu+\sqrt{\mu}f_0\geq 0$, and $a$ be a solution of the equation \eqref{scale_factor} satisfying $E_a\geq 0$. Then, there exist a constant $M>0$ such that if $E_k(0)<M/2$, then the equation \eqref{boltz} admits a unique solution $f$ on a time interval $[0,T]$, corresponding to the initial data $f_0$, such that $F=\mu+\sqrt{\mu}f\geq 0$ and $\sup_{[0,T]}E_k(t)\leq M$, where $E_k$ is continuous on $[0,T]$.
\end{lemma}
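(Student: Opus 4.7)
The plan is to run the standard Guo-type iteration from \cite{Guo031,StrainGuo06} on the linearized problem
\begin{align*}
\partial_t f^{n+1} + a_\gamma(t)\, L f^{n+1} = a_\gamma(t)\, \Gamma(f^n, f^n), \qquad f^{n+1}(0,\cdot) = f_0,
\end{align*}
starting from $f^0 \equiv 0$. Each $f^{n+1}$ exists as a solution to a linear equation with smooth-in-$t$ coefficients, the operator $a_\gamma L$ being bounded on the weighted Sobolev scales we work with by Lemmas \ref{lem1} and \ref{lem4}.

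The core step is a uniform $E_k$-bound. For each $|\beta|\leq N$ I would apply $\partial_\beta$ to the iteration, pair against $w^{2(|\beta|-k)}\partial_\beta f^{n+1}$ in $L^2_v$, and use \eqref{lem_est_linear1}--\eqref{lem_est_linear3} together with Lemma \ref{lem_positivity} to bound the contribution of $L$ from below by a multiple of $a_\gamma\|w^{|\beta|-k}\partial_\beta f^{n+1}\|_\nu^2$, up to lower-order remainders absorbable after summing over $|\beta|\leq N$. The right-hand side is controlled via Lemma \ref{lem_est_gamma} with $\theta=|\beta|-k$; the shifted weight $w^{\theta-|\beta_1|+k}=w^{|\beta|-|\beta_1|}$ in \eqref{lem_est_gamma1} is precisely what makes the mixed term collapse into factors of $\+f^n\+_k$ and $\+f^n\+_{\nu,k}$. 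After a Young inequality absorbing one $\|\cdot\|_\nu$ into the linear dissipation, I expect
\begin{align*}
\frac{d}{dt}\+f^{n+1}\+_k^2 + c_0\, a_\gamma\, \+f^{n+1}\+_{\nu,k}^2 \leq C\, a_\gamma\, \+f^n\+_k^2\, \+f^n\+_{\nu,k}^2.
\end{align*}
Integrating and using the inductive bound $\+f^n(s)\+_k^2\leq 2M$ gives
\begin{align*}
E_k^{n+1}(t) \leq E_k(0) + CM \int_0^t a_\gamma \+f^n\+_{\nu,k}^2\,ds \leq \frac{M}{2} + C M^2,
\end{align*}
which closes the induction as soon as $M$ is chosen small enough that $CM\leq 1/2$. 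The $M$ so produced depends only on the constants in Lemmas \ref{lem_positivity}--\ref{lem_est_gamma}, and the bound holds on any $[0,T]$.

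For convergence, the differences $g^{n+1}:=f^{n+1}-f^n$ satisfy the linearized equation with source $a_\gamma[\Gamma(g^n,f^n)+\Gamma(f^{n-1},g^n)]$; reading the same energy estimate off this equation and using the uniform bound $E_k^n\leq M$ (possibly shrinking $M$ once more) produces a contraction of the analogous energy for $g^{n+1}$, so $\{f^n\}$ is Cauchy and its limit $f$ solves \eqref{boltz}. Uniqueness comes from the same difference estimate applied to any two solutions; continuity of $E_k$ is read off the energy identity for $f$ itself; and non-negativity of $F=\mu+\sqrt\mu f$ is secured by rewriting the iteration at the level of $F^n$ using the gain-loss splitting, so that each linear step preserves positivity of initial data, as in \cite{Guo031}.

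The most delicate point I anticipate is the weight bookkeeping in \eqref{lem_est_gamma1} when $k\geq 1$: the weight $w^{|\beta|-k}$ is then a positive power of $(1+|v|)^{|\gamma|}$ for small $|\beta|$, and one must verify, for every admissible split $\beta_0+\beta_1+\beta_2=\beta$, that the shifted weight produced by Lemma \ref{lem_est_gamma} matches an actual component of $\+\cdot\+_k$ or $\+\cdot\+_{\nu,k}$. Once that check is in place, the rest of the argument is a faithful replay of the Guo--Strain template, which is why the paper refers the proof to \cite{Guo031,StrainGuo06}.
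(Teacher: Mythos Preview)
Your outline follows the same Guo--Strain iteration template that the paper invokes, and the bookkeeping you describe for Lemma~\ref{lem_est_gamma} (the shifted weight $w^{\theta-|\beta_1|+k}=w^{|\beta|-|\beta_1|}$ landing back in $\+\cdot\+_k$ or $\+\cdot\+_{\nu,k}$) is exactly right. The convergence, uniqueness, continuity, and positivity parts are also handled the same way in the paper, by appeal to \cite{Guo031}.

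The one place your sketch diverges from the paper---and where it overreaches---is the claim of a clean coercive inequality
\[
\frac{d}{dt}\+f^{n+1}\+_k^2 + c_0\, a_\gamma\, \+f^{n+1}\+_{\nu,k}^2 \leq C\, a_\gamma\, \+f^n\+_k^2\, \+f^n\+_{\nu,k}^2
\]
valid for every fixed $k$ and on \emph{any} $[0,T]$. For $k>0$ the $\beta=0$ piece comes from \eqref{lem_est_linear3}, which leaves a remainder $-C_k\|f^{n+1}\|_\nu^2$; this term is \emph{not} absorbable by summing over $|\beta|\le N$ at a single $k$, because the $\beta=0$ contribution to $\+\cdot\+_{\nu,k}$ is $\|w^{-k}f\|_\nu$, and $\|f\|_\nu\le\|w^{-k}f\|_\nu$ goes the wrong way for absorption. (In the global argument the paper cures this by adding in a large multiple of the $k=0$ estimate; see the passage from \eqref{case21}--\eqref{case22} to \eqref{case2}.) The paper sidesteps the issue at the local stage entirely: it says explicitly that ``the smallness of $\eta$ may be ignored,'' accepts a right-hand side containing $C a_\gamma\+f^{n+1}\+_{\nu,k}^2$-type terms, and then closes the uniform bound on a \emph{short} interval $[0,T]$ via Gr\"onwall. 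So your estimate is missing a term of size $C a_\gamma\+f^{n+1}\+_k^2$ (or $\+f^{n+1}\+_{\nu,k}^2$) on the right for $k>0$, and once you restore it, the conclusion is local in time---which is precisely what Lemma~\ref{lem_local} asserts---rather than the $T$-independent bound you claim.
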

The local-in-time solution of Lemma \ref{lem_local} is now extended to a global-in-time solution. Let $f$ be a local solution, and consider the equation \eqref{boltz}. Taking $\partial_\beta$ and multiplying $w^{2|\beta|-2k}\partial_\beta f$ to the equation, we have
\begin{align}
&w^{2|\beta|-2k}\partial_\beta f\partial_t\partial_\beta f+a_\gamma w^{2|\beta|-2k}\partial_\beta \Big[\nu(v)f\Big]\partial_\beta f-a_\gamma w^{2|\beta|-2k}\partial_\beta \Big[ Kf\Big]\partial_\beta f\nonumber\\
&\quad=a_\gamma w^{2|\beta|-2k}\partial_\beta\Gamma(f,f)\partial_\beta f,\label{global}
\end{align}
and then integrate the resulting equation over $\bbr^3_v$. We first apply Lemma \ref{lem_est_gamma} to the right hand side and can see that it is estimated as $a_\gamma |\langle w^{2|\beta|-2k}\partial_\beta\Gamma(f,f),\partial_\beta f\rangle|\leq C a_\gamma E^{1/2}_k\+ f\+_{\nu,k}^2$, which holds for any $\beta\geq 0$ and $k\geq 0$. To estimate the other quantities, we separate the cases as $k=0$ and $k>0$.\\

\noindent{\bf Case 1.} ($k=0$). Apply Lemma \ref{lem_positivity} for $\beta=0$, and use \eqref{lem_est_linear1} and \eqref{lem_est_linear2} for $0<|\beta|\leq N$. Summing over all the possible $0\leq|\beta|\leq N$ and using the smallness of $\eta>0$, we obtain
\begin{align}
\frac12\frac{d}{dt}\sum_{|\beta|\leq N}\|w^{|\beta|}\partial_\beta f\|^2+\delta a_\gamma\sum_{|\beta|\leq N}\|w^{|\beta|}\partial_\beta f\|_\nu^2\leq Ca_\gamma E_k^{\frac12}\+f\+_{\nu,k}^2,\label{case1}
\end{align}
where the constant $\delta>0$ is the one given in Lemma \ref{lem_positivity}.\\

\noindent{\bf Case 2.} ($k>0$). We first consider the case $\beta=0$. Applying \eqref{lem_est_linear3} to \eqref{global} and summing over all $0<k\leq m$, we obtain
\begin{align}\label{case21}
\frac12\frac{d}{dt}\sum_{0<k\leq m}\|w^{-k}f\|^2+\frac{a_\gamma}{2}\sum_{0<k\leq m}\|w^{-k}f\|_\nu^2\leq C_ma_\gamma \|f\|_\nu^2+Ca_\gamma E_m^{\frac12}\+f\+_{\nu,m}^2,
\end{align}
where we used the fact that $\+\cdot\+_k\leq\+\cdot\+_m$ and $\+\cdot\+_{\nu,k}\leq\+\cdot\+_{\nu,m}$ for $k\leq m$. For the case $\beta\neq 0$, we apply \eqref{lem_est_linear1} and \eqref{lem_est_linear2}, and sum over all the possible $0<|\beta|\leq N$ and $0<k\leq m$, to get
\begin{align*}
&\frac12\frac{d}{dt}\sum_{\substack{0<|\beta|\leq N\\0<k\leq m}}\|w^{|\beta|-k}\partial_\beta f\|^2+a_\gamma\sum_{\substack{0<|\beta|\leq N\\0<k\leq m}}\|w^{|\beta|-k}\partial_\beta f\|_\nu^2\\
&\quad \leq a_\gamma\eta\sum_{\substack{0<|\beta|\leq N\\0<k\leq m}}\|w^{|\beta|-k}\partial_\beta f\|_\nu^2+C_\eta a_\gamma\sum_{0<k\leq m}\|w^{-k}f\|_\nu^2+Ca_\gamma E_m^{\frac12}\+f\+_{\nu,m}^2,
\end{align*}
which shows that for a small $\eta>0$,
\begin{align}
&\frac12\frac{d}{dt}\sum_{\substack{0<|\beta|\leq N\\0<k\leq m}}\|w^{|\beta|-k}\partial_\beta f\|^2+\frac{a_\gamma}{2}\sum_{\substack{0<|\beta|\leq N\\0<k\leq m}}\|w^{|\beta|-k}\partial_\beta f\|_\nu^2\nonumber\\
&\quad \leq C_\eta a_\gamma\sum_{0<k\leq m}\|w^{-k}f\|_\nu^2+Ca_\gamma E_m^{\frac12}\+f\+_{\nu,m}^2.\label{case22}
\end{align}
Multiply the equation \eqref{case21} by a suitable constant and add it to the equation \eqref{case22}, then the first quantity on the right side of \eqref{case22} is controlled, and we obtain
\begin{align}
&\frac12\frac{d}{dt}\sum_{\substack{|\beta|\leq N\\0<k\leq m}}\|w^{|\beta|-k}\partial_\beta f\|^2+\frac{a_\gamma}{2}\sum_{\substack{|\beta|\leq N\\0<k\leq m}}\|w^{|\beta|-k}\partial_\beta f\|_\nu^2 \leq C a_\gamma\|f\|_\nu^2+Ca_\gamma E_m^{\frac12}\+f\+_{\nu,m}^2,\label{case2}
\end{align}
where the constants $C$ may now depend on $\eta$ and $m$.\\

\noindent{\bf Case 3.} ($k\geq 0$). Multiply the equation \eqref{case1} by a suitable constant and add it to the equation \eqref{case2}, then the first quantity on the right side of \eqref{case2}, and we obtain
\begin{align*}
\frac12\frac{d}{dt}\sum_{\substack{|\beta|\leq N\\0\leq k\leq m}}\|w^{|\beta|-k}\partial_\beta f\|^2+\delta a_\gamma\sum_{\substack{|\beta|\leq N\\0\leq k\leq m}}\|w^{|\beta|-k}\partial_\beta f\|_\nu^2 \leq Ca_\gamma E_m^{\frac12}\+f\+_{\nu,m}^2.
\end{align*}
Multiplying $\delta^{-1}$ to the both sides, with the definition of $\mathcal{E}_m$, we get
\begin{align}\label{case3}
\mathcal{E}_m'\leq Ca_\gamma E_m^{\frac12}\+f\+_{\nu,m}^2,
\end{align}
and integrating it we obtain $\mathcal{E}_m(t)\leq \mathcal{E}_m(0)+C\sup_{[0,t]} E_m(s)^{\frac32}$. Since $\mathcal{E}_m\sim E_m$, we can apply the well-known argument to prove global-in-time existence for small $\mathcal{E}_m(0)$. We obtain the following result.
\begin{prop}\label{prop_global}
Let $m\geq 0$ be an integer, $f_0$ be an initial data of the equation \eqref{boltz} satisfying $F_0=\mu+\sqrt{\mu}f_0\geq 0$, and $a$ be a solution of the equation \eqref{scale_factor} satisfying $E_a\geq 0$. Then, there exist a constant $\varepsilon>0$ such that if $\mathcal{E}_m(0)<\varepsilon$, then the equation \eqref{boltz} admits a unique global-in-time solution $f$, corresponding to the initial data $f_0$, such that $F=\mu+\sqrt{\mu}f\geq 0$ and $\sup_{[0,\infty)}\mathcal{E}_m(t)\leq C\mathcal{E}_m(0)$.
\end{prop}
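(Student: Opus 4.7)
The plan is to combine local-in-time existence from Lemma \ref{lem_local} with the differential inequality \eqref{case3} via a standard continuation/bootstrap argument. Given initial data $f_0$ with $\mathcal{E}_m(0)<\varepsilon$ (with $\varepsilon>0$ to be chosen), Lemma \ref{lem_local} supplies a unique nonnegative local solution on some maximal interval $[0,T^*)$ with $\mathcal{E}_m$ continuous, and the objective is to show $T^*=\infty$ while maintaining $\sup_t\mathcal{E}_m(t)\leq C\mathcal{E}_m(0)$.

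The first step is to convert \eqref{case3} into an integral inequality. Integrating on $[0,t]$ and using the fact that $\int_0^t a_\gamma(s)\+f(s)\+_{\nu,m}^2\,ds\leq E_m(t)\leq\mathcal{E}_m(t)$ directly from the definition of $E_m$, together with the equivalence $\mathcal{E}_m\sim E_m$ noted in the preliminaries, I obtain
\begin{equation*}
\mathcal{E}_m(t)\leq\mathcal{E}_m(0)+C\sup_{s\in[0,t]}\mathcal{E}_m(s)^{3/2}.
\end{equation*}

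The second step is the bootstrap. Setting $\Phi(T)=\sup_{[0,T]}\mathcal{E}_m(t)$, I would fix $M>0$ with $CM^{1/2}\leq 1/2$ and choose $\varepsilon=M/4$. On any interval where $\Phi(T)<M$ the above inequality yields $C\Phi(T)^{3/2}\leq\tfrac12\Phi(T)$, hence $\Phi(T)\leq 2\mathcal{E}_m(0)<M/2$. Since $\Phi$ is continuous with $\Phi(0)=\mathcal{E}_m(0)<M$, a standard open/closed argument shows this strict inequality persists on the entire existence interval. The a priori bound $\Phi(T)\leq 2\mathcal{E}_m(0)$ then lets me re-apply Lemma \ref{lem_local} from any putative blow-up time $T^*<\infty$ to extend past it, forcing $T^*=\infty$. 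Nonnegativity of $F$ is already guaranteed by Lemma \ref{lem_local} on each local piece and therefore propagates.

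There is no serious obstacle, since the delicate energy estimates were already carried out in the Case 1--Case 3 analysis leading to \eqref{case3}. The one subtle feature worth flagging is that the time-dependent coefficient $a_\gamma(t)$ appears on both sides of \eqref{case3}: it is essential that the dissipation integral $\int_0^t a_\gamma(s)\+f(s)\+_{\nu,m}^2\,ds$ is built into the definition of $E_k$ with exactly the same weight $a_\gamma$, so that the cubic right-hand side is absorbed cleanly without ever invoking the precise growth of $a(t)$. This is what makes the global bound uniform in time irrespective of whether $a(t)\sim t^{2/3}$ or $a(t)\sim t$.
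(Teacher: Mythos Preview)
Your proposal is correct and follows exactly the approach of the paper: integrate \eqref{case3} to obtain $\mathcal{E}_m(t)\leq\mathcal{E}_m(0)+C\sup_{[0,t]}E_m(s)^{3/2}$, invoke the equivalence $\mathcal{E}_m\sim E_m$, and close via the standard bootstrap/continuation argument that the paper simply calls ``the well-known argument.'' Your write-up merely spells out the details that the paper leaves implicit, including the observation that the common factor $a_\gamma$ on both sides of \eqref{case3} is what makes the bound uniform in $t$.
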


\subsection{Asymptotic behaviour}
Let $f$ be a solution of Proposition \ref{prop_global} satisfying $\mathcal{E}_m(t)\leq C\mathcal{E}_m(0)$, and we apply the interpolation introduced in \cite{StrainGuo06}. In the estimates \eqref{case21} and \eqref{case22}, we added all $0<k\leq m$, but if we only consider the summation over $0<k\leq r$ for an integer $0<r<m$, then the estimate \eqref{case3} can be rewritten as $y_r'/2+ a_\gamma\+f\+_{\nu,r}^2 \leq Ca_\gamma \mathcal{E}_m^{1/2}\+f\+_{\nu,r}^2$, and since $\mathcal{E}_m$ is small, we may write $y'_r+a_\gamma\+f\+^2_{\nu,r}\leq 0$. Note that $\+f\+_{\nu,r}$ is not equivalent to $\+f\+_{r}$, but only to $\+f\+_{(2r-1)/2}$. For an integer $k$ satisfying $r+k\leq m$, we use the interpolation as follows:
\begin{align*}
y_r(t)\leq C\+f(t)\+_r^2&\leq C\+f(t)\+_{r-1}^{2k/(k+1)}\+f(t)\+_{r+k}^{2/(k+1)}\\
&\leq C\+f(t)\+_{r-1}^{2k/(k+1)}\mathcal{E}_m^{1/(k+1)}(0)\leq C\+f(t)\+_{\nu,r}^{2k/(k+1)}\mathcal{E}_m^{1/(k+1)}(0).
\end{align*}
Applying this to the above differential inequality, we have $y'_r+C\mathcal{E}_m^{-1/k}(0)a_\gamma y_r^{(k+1)/k}\leq 0$, which can be written as $d(y^{-1/k}_r)/dt\geq Ca_\gamma\mathcal{E}_m^{-1/k}(0)$, and then integrate it to obtain
\begin{align}
y_r(t)\leq\bigg(y_r^{-\frac{1}{k}}(0)+C\mathcal{E}_m^{-\frac{1}{k}}(0)\int_0^t a_\gamma(s)ds\bigg)^{-k}\leq Cy_r(0)\bigg(1+\int_0^ta_\gamma(s)ds\bigg)^{-k}.\label{y_r}
\end{align}
Recall that $a_\gamma(t)=a^{-3-\gamma}(t)$, where $a(t)\leq C(1+t)^{2/3}$ for $E_a=0$ and $a(t)\leq C(1+t)$ for $E_a>0$. Note that $-3-\gamma$ is negative, and we have the following cases:
\begin{itemize}
\item[{\sf (i)}] If $E_a=0$ and $\gamma=-3/2$, then $a_\gamma(t)\geq C(1+t)^{-1}$ and
\begin{align*}
\int_0^t a_\gamma(s)ds\geq C\int_0^t(1+s)^{-1}ds\geq C\ln(1+t),
\end{align*}
hence we obtain from \eqref{y_r} that $y_r(t)\leq Cy_r(0)(1+\ln(1+t))^{-k}$.
\item[{\sf (ii)}] If $E_a=0$ and $-3<\gamma<-3/2$, then $a_\gamma(t)\geq C(1+t)^{-2-2\gamma/3}$ and
\begin{align*}
\int_0^ta_\gamma(s)ds&\geq C\int_0^t(1+s)^{-2-\frac23\gamma}ds\geq C\Big((1+t)^{-1-\frac23\gamma}-1\Big),
\end{align*}
therefore we have $y_r(t)\leq Cy_r(0)(1+t)^{k+2\gamma k/3}$. Note that $k+2\gamma k/3$ is negative for $k\geq 1$, hence $y_r$ decays.
\item[{\sf (iii)}] If $E_a>0$ and $\gamma=-2$, then we have as in {\sf (i)} that $y_r(t)\leq Cy_r(0)(1+\ln(1+t))^{-k}$.
\item[{\sf (iv)}] If $E_a>0$ and $-3<\gamma<-2$, then we have $y_r(t)\leq Cy_r(0)(1+t)^{2k+\gamma k}$ in a similar way. 
\end{itemize}
We combine these results and obtain the following theorem.

\begin{thm}
Let $m\geq 2$ be an integer, $f_0$ be an initial data of the equation \eqref{boltz} satisfying $F_0=\mu+\sqrt{\mu}f_0\geq 0$, and $a$ be a solution of the equation \eqref{scale_factor} satisfying $E_a\geq 0$. Suppose that $f$ is a solution constructed in Proposition \ref{prop_global} such that $\mathcal{E}_m(t)\leq C\mathcal{E}_m(0)$. If $r$ and $k$ are integers satisfying $0<r<r+k\leq m$, then $\+f(t)\+_r$ decays as follows:
\begin{align*}
&\+f(t)\+_r\leq C\mathcal{E}_m(0)(1+\ln(1+t))^{-k},\quad \mbox{if $E_a=0$ and $\gamma=-\frac32$,}\\
&\+f(t)\+_r\leq C\mathcal{E}_m(0)(1+t)^{k+\frac23\gamma k},\quad \mbox{if $E_a=0$ and $-3<\gamma<-\frac32$,}\allowdisplaybreaks\\
&\+f(t)\+_r\leq C\mathcal{E}_m(0)(1+\ln(1+t))^{-k},\quad \mbox{if $E_a>0$ and $\gamma=-2$,}\\
&\+f(t)\+_r\leq C\mathcal{E}_m(0)(1+t)^{2k+\gamma k},\quad \mbox{if $E_a>0$ and $-3<\gamma=-2$},
\end{align*}
where the constant $C$ depends on $E_a$, $\gamma$, and $k$.
\end{thm}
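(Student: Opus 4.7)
The plan is to follow the interpolation argument already sketched before the theorem statement and turn it into a clean closed-form decay bound in each of the four parameter regimes. The starting point is the refined energy inequality obtained by redoing the computation leading to \eqref{case3} but summing $k$ only over the range $0<k\le r$ rather than $0<k\le m$. This replaces $\mathcal{E}_m$ on the left by $y_r$ (up to equivalence) and $\+f\+_{\nu,m}$ on the right by $\+f\+_{\nu,r}$, while the remainder term $C a_\gamma \mathcal{E}_m^{1/2}\+f\+_{\nu,r}^2$ comes from Lemma~\ref{lem_est_gamma} applied to the nonlinear contribution. Using the global smallness $\mathcal{E}_m(t)\le C\mathcal{E}_m(0)\ll 1$ from Proposition~\ref{prop_global}, the right-hand side is absorbed into the dissipation, yielding the master inequality
\[
y_r'(t)+c\,a_\gamma(t)\,\+f(t)\+_{\nu,r}^2\le 0
\]
for some $c>0$.

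Next I would convert this into a closed ODE for $y_r$ by the interpolation trick of \cite{StrainGuo06}. Since $\+f\+_{\nu,r}$ controls $\+f\+_{r-1}$ (because the weight $w=(1+|v|)^\gamma\le 1$ with $\gamma<0$ means the $\nu$-norm at level $r$ dominates the plain norm at level $r-1$), and since $\+f\+_{r+k}^2\le C\mathcal{E}_m(0)$ is finite for $r+k\le m$, the three-term log-convexity
\[
\+f\+_r^2\le C\+f\+_{r-1}^{2k/(k+1)}\+f\+_{r+k}^{2/(k+1)}
\]
gives $y_r\le C\,\+f\+_{\nu,r}^{2k/(k+1)}\mathcal{E}_m(0)^{1/(k+1)}$. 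Substituting back into the master inequality produces the Bernoulli-type differential inequality
\[
y_r'(t)+C\,\mathcal{E}_m(0)^{-1/k}\,a_\gamma(t)\,y_r(t)^{(k+1)/k}\le 0,
\]
and rewriting it as $\frac{d}{dt}\bigl(y_r^{-1/k}\bigr)\ge C\,\mathcal{E}_m(0)^{-1/k}\,a_\gamma(t)$ and integrating from $0$ to $t$ yields the abstract decay bound \eqref{y_r}.

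The remaining work is purely an ODE computation: estimating $\int_0^t a_\gamma(s)\,ds=\int_0^t a(s)^{-3-\gamma}\,ds$ from below, using the asymptotics $a(t)\sim t^{2/3}$ when $E_a=0$ and $a(t)\sim t$ when $E_a>0$ recalled in the preliminaries. In each of the four regimes one separates the critical exponent (where the integrand is $\sim (1+s)^{-1}$, giving a logarithm) from the supercritical case (where the integral grows like a positive power of $1+t$). Plugging these lower bounds into \eqref{y_r} and taking a square root to pass from $y_r$ to $\+f\+_r$ produces precisely the four stated decay rates, with the constant $C$ depending on $E_a$, $\gamma$, and $k$ through the constants in the integral estimates.

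The one step that requires care is the interpolation: one must verify that $\+f\+_{\nu,r}$ really does dominate $\+f\+_{r-1}$ in the sense needed (this is where the negativity of $\gamma$ and the definition $w=(1+|v|)^\gamma$ enter, since the weight loss of one index in going from $\+\cdot\+_r$ to the $\nu$-norm matches the index shift from $r$ to $r-1$), and that $\+f\+_{r+k}\le C\mathcal{E}_m(0)^{1/2}$ uniformly in time, which is guaranteed precisely by the hypothesis $r+k\le m$ together with Proposition~\ref{prop_global}. Otherwise the argument is a routine but careful bookkeeping of exponents, and no new functional-analytic input beyond Lemmas~\ref{lem1}--\ref{lem_est_gamma} is required.
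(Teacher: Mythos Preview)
Your proposal is correct and follows essentially the same route as the paper's own argument: redo the energy estimate summing only over $0<k\le r$ to obtain $y_r'+a_\gamma\+f\+_{\nu,r}^2\le 0$ after absorbing the small nonlinear term, then use the interpolation $\+f\+_r^2\le C\+f\+_{r-1}^{2k/(k+1)}\+f\+_{r+k}^{2/(k+1)}$ together with $\+f\+_{r-1}\lesssim\+f\+_{\nu,r}$ and the global bound $\+f\+_{r+k}^2\le C\mathcal{E}_m(0)$ to derive the Bernoulli inequality, integrate $d(y_r^{-1/k})/dt\ge C\mathcal{E}_m(0)^{-1/k}a_\gamma$, and finish with the case-by-case lower bounds on $\int_0^t a_\gamma(s)\,ds$. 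Your remark that a square root is needed to pass from $y_r$ to $\+f\+_r$ is in fact more careful than the paper, whose theorem statement records the $y_r$-exponents directly on $\+f\+_r$; strictly speaking one obtains $\+f(t)\+_r^2\le C\mathcal{E}_m(0)(\cdots)^{-k}$ rather than the displayed bound, so the decay exponents on $\+f\+_r$ should be halved.
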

\subsection{Remarks}
In this paper we considered a universe model \eqref{universe}--\eqref{scale_factor}, which satisfies the VPB system in the soft potential case. To study its asymptotic stability, we considered a perturbation and assumed the spatial homogeneity to the transformed equations \eqref{vpb1_tilde}--\eqref{vpb2_tilde}. By the energy method of \cite{Guo031} and the interpolation of \cite{StrainGuo06} we obtained the global-in-time existence and the asymptotic behaviour of solutions of the equation \eqref{boltz}. It was observed in \cite{Lee131} that the expansion of the universe reduces the effect of collisions such that solutions do not seem to converge to the universe model \eqref{universe}--\eqref{scale_factor}, but in this paper we showed that it is recovered, in a certain range of soft potentials, such that the solutions asymptotically tend to the universe model. In the present paper we only considered the spatially homogeneous case, but it will be more challenging to study the inhomogeneous case. We hope that the result of this paper can provide an intuition on the research of the general relativistic Boltzmann equation.

\section*{Acknowledgements}
H. Lee has been supported by the TJ Park Science Fellowship of POSCO TJ Park Foundation. This research was supported by Basic Science Research Program through the National Research Foundation of Korea(NRF) funded by the Ministry of Science, ICT \& Future Planning(NRF-2015R1C1A1A01055216).

\end{document}